\documentclass[11pt]{article}
\hyphenation{op-tical net-works semi-conduc-tor}

\usepackage{graphicx,amsmath,amsbsy,amssymb,amsthm,citesort,epsfig,epsf,url,enumerate,comment}
\usepackage[usenames]{color}
\usepackage{bibspacing}
\setlength{\bibspacing}{\baselineskip}

\hoffset=0in \voffset=0in \evensidemargin=0in \oddsidemargin=0in
\textwidth=6.5in \topmargin=0in \headheight=0.0in \headsep=0.0in
\textheight=9in

\def\real    { \mathbb{R} }

\newtheorem{thm}{Theorem} 
\newtheorem{lemma}{Lemma} 
\newtheorem{cor}{Corollary}

\newcommand{\bitem}{\begin{itemize}}
\newcommand{\eitem}{\end{itemize}}

\newcommand{\supp}{\mathrm{supp}}

\newcommand{\beqn}{\begin{equation}}
\newcommand{\eeqn}{\end{equation}}
\newcommand{\balign}{\begin{align}}
\newcommand{\ealign}{\end{align}}

\newcommand{\cN}{\mathcal{N}}
\newcommand{\cP}{\mathcal{P}}

\newcommand{\by}{\boldsymbol{y}}
\newcommand{\bx}{\boldsymbol{x}}

\newcommand{\bX}{\boldsymbol{X}}
\newcommand{\bA}{\boldsymbol{A}}
\newcommand{\bz}{\boldsymbol{z}}

\newcommand{\bw}{\boldsymbol{w}}

\newcommand{\bU}{\boldsymbol{U}}
\newcommand{\bSigma}{\boldsymbol{\Sigma}}
\newcommand{\bV}{\boldsymbol{V}}
\newcommand{\bI}{\boldsymbol{I}}
\newcommand{\bmu}{\boldsymbol{\mu}}

\newcommand{\bQ}{\boldsymbol{Q}}

\newcommand{\bzero}{\boldsymbol{0}}

\newcommand{\xhat}{\widehat{\bx}}

\newcommand{\expval}[2][]{\mathbb{E}_{#1} \left[ #2 \right]}
\newcommand{\prob}[2][]{\mathbb{P}_{#1} \left[ #2 \right]}
\newcommand{\norm}[2][2]{\left\| #2 \right\|_{#1}}

\newcommand{\absval}[1]{\left| #1 \right|}

\newcommand{\tr}[1]{\mathrm{Tr}\left( #1 \right)}

\newcommand{\Pset}{\mathcal{X}}
\newcommand{\restrict}[1]{\vert_{#1}}

\begin{document}

\title{How well can we estimate a sparse vector?}

\author{Emmanuel J. Cand\`{e}s and Mark A. Davenport\thanks{E.J.C.\
    is with the Departments of Mathematics and Statistics, Stanford
    University, Stanford, CA 94035. M.A.D.\ is with the School of Electrical and Computer Engineering, Georgia Institute of Technology, Atlanta, GA 30332. This work has
    been partially supported by NSF grant DMS-1004718, the Waterman
    Award from NSF, ONR grant N00014-10-1-0599 and a grant from AFOSR.
    M.A.D.\ is the corresponding author.  Email: mdav@gatech.edu.  Phone: (404)894-2881.  Fax: (404)894-8363.}}

\maketitle

\begin{abstract}
The estimation of a sparse vector in the linear model is a fundamental problem in signal processing, statistics, and compressive sensing.  This paper establishes a lower bound on the mean-squared error, which holds regardless of the sensing/design matrix being used and regardless of the estimation procedure. This lower bound very nearly matches the known upper bound one gets by taking a random projection of the sparse vector followed by an $\ell_1$ estimation procedure such as the Dantzig selector. In this sense, compressive sensing techniques cannot essentially be improved.
\end{abstract}

\begin{center}
\small {\bf Keywords:} Compressive sensing, sparse estimation, sparse linear regression, minimax \\ lower bounds, Fano's inequality, matrix Bernstein inequality
\end{center}

\section{Introduction}

The estimation of a sparse vector from noisy observations is a
fundamental problem in signal processing and statistics, and lies at
the heart of the growing field of compressive
sensing~\cite{Donoh_Compressed,CandeRT_Robust,CandeT_Near}. At its
most basic level, we are interested in accurately estimating a vector
$\bx \in \real^n$ that has at most $k$ non-zeros from a set of noisy
linear measurements
\begin{equation} \label{eq:measmodel1}
\by = \bA \bx + \bz,
\end{equation}
where $\bA \in \real^{m \times n}$ and $\bz \sim \cN(\bzero,\sigma^2
\bI)$.  We are often interested in the underdetermined setting where
$m$ may be much smaller than $n$.  In general, one would not expect to
be able to accurately recover $\bx$ when $m < n$ since there are more
unknowns than observations.  However it is by now well-known that by
exploiting sparsity, it is possible to accurately estimate $\bx$.

As an example, consider what is known concerning $\ell_1$ minimization techniques, which are among the most powerful and well-understood with respect to their performance in noise.  Specifically, if we suppose that the entries of the matrix $\bA$ are i.i.d.\ $\cN(0,1/n)$, then one can show that for any $\bx \in \Sigma_k :=  \{ \bx : \norm[0]{\bx} \le k \}$, $\ell_1$ minimization techniques such as the Lasso or the Dantzig selector produce a recovery $\xhat$ such that
\begin{equation} \label{eq:ds_error}
\frac{1}{n} \norm{\xhat - \bx }^2 \le C_0 \frac{k \sigma^2}{m} \log n
\end{equation}
holds with high probability provided that $m = \Omega \left( k \log(n/k) \right)$~\cite{CandeT_Dantzig}. We refer to
\cite{BickeRT_Simultaneous} and \cite{DonohJMM_Compressed} for further results.

\subsection{Criticism}

A noteworthy aspect of the bound in (\ref{eq:ds_error}) is that the recovery error increases linearly as we decrease $m$, and thus we pay a penalty for taking a small number of measurements.  Although this effect is sometimes cited as a drawback of the compressive sensing framework, it should not be surprising --- we fully expect that if each measurement has a constant SNR, then taking more measurements should reduce our estimation error.

However, there is another somewhat more troubling aspect of (\ref{eq:ds_error}).  Specifically, by filling the rows of $\bA$ with i.i.d.\ random variables, we are ensuring that our ``sensing vectors'' are almost orthogonal to our signal of interest, leading to a tremendous SNR loss.  To quantify this loss, suppose that we had access to an oracle that knows {\em a priori} the locations of the nonzero entries of $\bx$ and could instead construct $\bA$ with vectors localized to the support of $\bx$.  For example, if $m$ is an integer multiple of $k$ then we could simply measure sample each coefficient directly $m/k$ times and then average these samples.  One can check that this procedure would yield an estimate obeying
\begin{equation} \label{eq:oracle_error}
\expval{\frac{1}{n} \norm{\xhat - \bx }^2} =  \left(\frac{k \sigma^2}{m}\right)  \left(\frac{k}{n}\right).
\end{equation}
Thus, the performance in (\ref{eq:ds_error}) is worse than what would
be possible with an oracle by a factor of $(n/k)\log n$. When $k$ is
small, this is very large!  Of course, we won't have access to an
oracle in practice, but the substantial difference between
(\ref{eq:ds_error}) and (\ref{eq:oracle_error}) naturally leads one to
question whether (\ref{eq:ds_error}) can be improved upon.

\subsection{Can we do better?}

In this paper we will approach this question from the viewpoint of
compressive sensing and/or of experimental design.  Specifically, we
assume that we are free to choose {\em both} the matrix $\bA$ {\em
  and} the sparse recovery algorithm.  Our results will have
implications for the case where $\bA$ is determined by factors beyond
our control, but our primary interest will be in considering the
performance obtained by the best possible choice of $\bA$.  In this
setting, our fundamental question is:
\begin{quote}
{\em
Can we ever hope to do better than (\ref{eq:ds_error})?  Is there a more intelligent choice for the matrix $\bA$?  Is there a more effective recovery algorithm?}
\end{quote}
In this paper we show that the answer is {\em no}, and that there exists no choice of $\bA$ or recovery algorithm that can significantly improve upon the guarantee in (\ref{eq:ds_error}). Specifically, we consider the worst-case error over all $\bx \in \Sigma_k$, i.e.,
\begin{equation} \label{eq:minimaxrisk}
M^*(\bA) = \inf_{\xhat} \sup_{\bx \in \Sigma_k} \expval{ \frac{1}{n} \norm{\xhat(\by) - \bx}^2}.
\end{equation}
Our main result consists of the following bound, which establishes a fundamental limit on the minimax risk which holds for any matrix $\bA$ and any possible recovery algorithm.
\begin{thm} \label{thm1}
Suppose that we observe $\by = \bA \bx + \bz$ where $\bx$ is a $k$-sparse vector, $\bA$ is an $m \times n$ matrix with $m \ge k$, and $\bz \sim \cN(\bzero,\sigma^2 \bI)$. Then there exists a constant $C_1>0$ such that for all $\bA$,
\begin{equation} \label{eq:mmriskbound_a}
M^*(\bA) \ge C_1 \frac{k \sigma^2}{\norm[F]{\bA}^2} \log \left(n/k\right).
\end{equation}
We also have that for all $\bA$
\begin{equation} \label{eq:mmriskbound_b}
M^*(\bA) \ge \frac{k \sigma^2}{\norm[F]{\bA}^2}.
\end{equation}
\end{thm}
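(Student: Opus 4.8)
The plan is to establish the two bounds by different devices: the dimension-free bound \eqref{eq:mmriskbound_b} by a Bayesian, Cram\'er--Rao--flavored reduction to a known-support problem, and the sharper bound \eqref{eq:mmriskbound_a} by Fano's method applied to a rich packing of sparse vectors. In both arguments the Frobenius norm $\norm[F]{\bA}^2 = \tr{\bA^\top\bA}$ --- rather than the operator norm --- emerges precisely because we average the difficulty over \emph{which} coordinates are excited; the structural fact I would lean on is that $\tr{\bA\,\bQ\,\bA^\top}\le (\max_i \bQ_{ii})\,\norm[F]{\bA}^2$ whenever $\bQ$ is diagonal, so any device that renders the relevant second-moment matrix diagonal converts operator-norm quantities into $\norm[F]{\bA}^2$.

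For the simpler bound \eqref{eq:mmriskbound_b} I would lower bound the minimax risk by the Bayes risk of a convenient prior $\pi$ supported on $\Sigma_k$: draw a support $S$ uniformly among the $\binom{n}{k}$ size-$k$ sets, and put an i.i.d.\ $\cN(0,\tau^2)$ prior on the coordinates in $S$. Since the infimum of an average dominates the average of infima, the mixture Bayes risk is at least $\binom{n}{k}^{-1}\sum_S B(\pi_S)$, where $B(\pi_S)$ is the Bayes risk of the fixed-support linear-Gaussian model. That model is exactly solvable: its posterior covariance is $\left(\tau^{-2}\bI + \sigma^{-2}\bA_S^\top\bA_S\right)^{-1}$, so letting $\tau\to\infty$ gives $B(\pi_S)\to \sigma^2\,\tr{(\bA_S^\top\bA_S)^{-1}}$ (and $+\infty$ if $\bA_S$ is rank-deficient, which only strengthens the bound; the hypothesis $m\ge k$ is what permits full column rank). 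Cauchy--Schwarz on the eigenvalues gives $\tr{(\bA_S^\top\bA_S)^{-1}}\ge k^2/\tr{\bA_S^\top\bA_S} = k^2/\norm[F]{\bA_S}^2$, and since each column lies in a fraction $k/n$ of the supports, $\binom{n}{k}^{-1}\sum_S \norm[F]{\bA_S}^2 = \tfrac{k}{n}\norm[F]{\bA}^2$. Convexity of $x\mapsto 1/x$ (Jensen) then yields $\binom{n}{k}^{-1}\sum_S k^2/\norm[F]{\bA_S}^2 \ge kn/\norm[F]{\bA}^2$, so the un-normalized Bayes risk is at least $\sigma^2 kn/\norm[F]{\bA}^2$; dividing by $n$ (the normalization in \eqref{eq:minimaxrisk}) gives \eqref{eq:mmriskbound_b} with constant $1$.

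For \eqref{eq:mmriskbound_a} I would run Fano over a packing of \emph{signed} sparse vectors. Fix a scale $a>0$ and take $\bx = a\sum_{i\in S}\eta_i\boldsymbol{e}_i$, with $S$ a random $k$-subset and $\eta_i$ i.i.d.\ signs; the random signs make $\cov{\bx} = a^2\tfrac{k}{n}\bI$ diagonal. Using $\log\determ{\bI+\bM}\le\tr{\bM}$ together with the Gaussian channel, the mutual information obeys $I(\bx;\by)\le \tfrac{1}{2\sigma^2}\tr{\bA\,\cov{\bx}\,\bA^\top} = \tfrac{a^2 k}{2n\sigma^2}\norm[F]{\bA}^2$, which is the promised $\norm[F]{\bA}^2$ (the diagonal covariance is what makes this happen). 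A Varshamov--Gilbert argument supplies a subfamily of $M$ such vectors with $\log M \gtrsim k\log(n/k)$ and pairwise separation $\norm{\bx_i-\bx_j}^2\gtrsim k a^2$. The local Fano inequality then gives $\inf_{\xhat}\max_i\expval[i]{\norm{\xhat-\bx_i}^2}\gtrsim k a^2\bigl(1 - (I+\log 2)/\log M\bigr)$. Choosing $a^2\asymp n\sigma^2\log(n/k)/\norm[F]{\bA}^2$ makes $I\le\tfrac12\log M$, so the parenthetical factor is bounded below, the un-normalized risk is $\gtrsim k a^2 \asymp kn\sigma^2\log(n/k)/\norm[F]{\bA}^2$, and dividing by $n$ yields \eqref{eq:mmriskbound_a}.

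The main obstacle is \emph{not} the Fano/Bayes bookkeeping but securing $\norm[F]{\bA}^2$ rather than the operator norm in the information term. A single worst-case pair in the packing can have $\norm{\bA(\bx_i-\bx_j)}^2$ as large as $\norm[2]{\bA}^2\cdot k a^2$ (operator norm), which would ruin the bound; only the \emph{average} pairwise divergence, i.e.\ the trace $\tr{\bA\,\cov{\bx}\,\bA^\top}$, sees $\norm[F]{\bA}^2$. Thus the crux is to exhibit a packing that is simultaneously large ($\log M\gtrsim k\log(n/k)$), well-separated, and whose empirical second-moment matrix stays $\preceq C a^2\tfrac{k}{n}\bI$; the random-sign/random-support device zeroes the off-diagonal cross terms, and a probabilistic-method selection reconciles the three requirements. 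Verifying that these can hold at once --- equivalently, that the mean vector and the column correlations of $\bA$ cannot conspire to inflate the divergence --- is the step I expect to require genuine care.
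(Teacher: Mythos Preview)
Your plan matches the paper's proof closely. For \eqref{eq:mmriskbound_b} the paper also reduces to the known-support problem and uses $\tr{(\bA_T^*\bA_T)^{-1}}\ge k^2/\norm[F]{\bA_T}^2$, but rather than averaging over supports and applying Jensen it simply takes the supremum over $T$ and picks, by pigeonhole, a single $T_0$ with $\norm[F]{\bA_{T_0}}^2\le (k/n)\norm[F]{\bA}^2$; both routes yield constant~$1$. For \eqref{eq:mmriskbound_a} the paper runs exactly the Fano argument you describe over a random packing drawn from $\{0,\pm 1/\sqrt{k}\}^n\cap\Sigma_k$, bounding the information term by $\norm[F]{\bA}^2\norm[]{\bQ}$ with $\bQ=\tfrac{1}{|\Pset|}\sum_i\bx_i\bx_i^*$; the step you correctly flag as requiring genuine care --- forcing the \emph{empirical} second-moment matrix $\bQ$ of the packing to stay within $\beta/n$ of $\bI/n$ while keeping the packing of size $(n/k)^{k/4}$ and $1/2$-separated --- is carried out via the Ahlswede--Winter matrix Bernstein inequality applied to the i.i.d.\ draws, together with a union bound for the separation.
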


This theorem says that {\em there is no $\bA$} and {\em no recovery
  algorithm} that does fundamentally better than the Dantzig selector
(\ref{eq:ds_error}) up to a constant\footnote{Our analysis shows that
  asymptotically $C_1$ can be taken as $1/128$.  We have made no
  effort to optimize this constant, and it is probably far from sharp.
  This is why we give the simpler bound (\ref{eq:mmriskbound_b}) which
  is proven by considering the error we would incur even if we knew
  the support of $\bx$ {\em a priori}.  However, our main result is
  (\ref{eq:mmriskbound_a}).  We leave the calculation of an improved
  constant to future work.}; that is, ignoring the difference in the
factors $\log n/k$ and $\log n$.  In this sense, the results of
compressive sensing are at the limit.

Although the noise model in (\ref{eq:measmodel1}) is fairly common, in some settings (such as the estimation of a signal transmitted over a noisy channel) it is more natural to consider noise that has been added directly to the signal prior to the acquisition of the measurements.  In this case we can directly apply Theorem~\ref{thm1} to obtain the following corollary.

\begin{cor} \label{cor1}
Suppose that we observe $\by = \bA(\bx + \bw)$ where $\bx$ is a $k$-sparse vector, $\bA$ is an $m \times n$ matrix with $k \le m \le n$, and $\bw  \sim \cN(\bzero,\sigma^2 \bI)$.  Then for all $\bA$
\begin{equation} \label{eq:mmriskbound2}
M^*(\bA) \ge C_1 \frac{k \sigma^2}{m} \log \left(n/k\right) \quad \quad \mathrm{and} \quad \quad M^*(\bA) \ge \frac{k \sigma^2}{m}.
\end{equation}
\end{cor}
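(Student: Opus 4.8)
The plan is to reduce the pre-measurement noise model $\by = \bA(\bx + \bw)$ of the corollary to the post-measurement model $\by = \bA\bx + \bz$ of Theorem~\ref{thm1} by a single noise-whitening step, and then merely read off the Frobenius norm of the resulting matrix. The point is that the effective noise here is $\bA\bw \sim \cN(\bzero,\sigma^2 \bA\bA^T)$, which is in general \emph{colored} rather than isotropic, so Theorem~\ref{thm1} does not apply verbatim. To remedy this I would first suppose that $\bA$ has full row rank, so that $\bA\bA^T \in \real^{m\times m}$ is invertible, and pass to the data $\tilde{\by} = (\bA\bA^T)^{-1/2}\by$.

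Because $(\bA\bA^T)^{-1/2}$ is an invertible linear map, observing $\tilde{\by}$ is informationally equivalent to observing $\by$, and since the loss $\frac{1}{n}\norm{\xhat - \bx}^2$ depends on the data only through the estimator $\xhat$, the class of estimators of $\bx$ as functions of $\tilde{\by}$ coincides with that of functions of $\by$. Hence the minimax risk is left unchanged: the corollary's risk for $\bA$ equals the Theorem~\ref{thm1} risk for the whitened problem $\tilde{\by} = \tilde{\bA}\bx + \tilde{\bz}$, where $\tilde{\bA} = (\bA\bA^T)^{-1/2}\bA$ and $\tilde{\bz} = (\bA\bA^T)^{-1/2}\bA\bw$ has covariance $\sigma^2 (\bA\bA^T)^{-1/2}\bA\bA^T(\bA\bA^T)^{-1/2} = \sigma^2\bI$. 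This is exactly the hypothesis of Theorem~\ref{thm1}, with $m \ge k$ inherited from the full-rank assumption.

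The key computation is then immediate: $\tilde{\bA}\tilde{\bA}^T = (\bA\bA^T)^{-1/2}\bA\bA^T(\bA\bA^T)^{-1/2} = \bI$, so $\norm[F]{\tilde{\bA}}^2 = \tr{\bI} = m$. Applying both displays of Theorem~\ref{thm1} to $\tilde{\bA}$ and substituting $\norm[F]{\tilde{\bA}}^2 = m$ produces precisely the two bounds asserted in (\ref{eq:mmriskbound2}).

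The only genuine obstacle is the full-rank assumption. If $\rank(\bA) = r < m$, then $\bA\bA^T$ is singular and no whitening exists; here I would instead take a thin SVD $\bA = \bU\bSigma\bV^T$ with $\bU \in \real^{m\times r}$ and $\bV \in \real^{n\times r}$, and observe that $\by$ lies in the column space of $\bU$, so that $\bSigma^{-1}\bU^T\by = \bV^T\bx + \bV^T\bw$ retains all the available information and carries isotropic noise $\cN(\bzero,\sigma^2\bI_r)$. The reduced matrix $\bV^T$ satisfies $\bV^T\bV = \bI_r$, hence $\norm[F]{\bV^T}^2 = r \le m$. When $r \ge k$, Theorem~\ref{thm1} applies and yields the \emph{stronger} bound with $r$ in the denominator, which a fortiori gives (\ref{eq:mmriskbound2}); when $r < k$, even an oracle that knows the support of $\bx$ confronts a rank-deficient Fisher information, so the risk is infinite and the bounds hold trivially. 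Thus the whitening reduction together with this degenerate-case check completes the argument.
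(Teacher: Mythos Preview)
Your argument is correct and is essentially the paper's own proof: the paper also passes to the reduced SVD $\bA = \bU\bSigma\bV^*$, applies $\bSigma^{-1}\bU^*$ to whiten the noise, and then invokes Theorem~\ref{thm1} on $\bV^*$ with $\norm[F]{\bV^*}^2 = m' \le m$. Your full-row-rank whitening via $(\bA\bA^T)^{-1/2}$ is just the special case $m'=m$ of this SVD reduction (up to left-multiplication by the orthogonal $\bU$), and your rank-deficient branch is verbatim the paper's proof; if anything you are slightly more careful in explicitly disposing of the degenerate case $r<k$, which the paper leaves implicit.
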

\begin{proof}
We assume that $\bA$ has rank $m' \le m$.  Let $\bU \bSigma \bV^*$ be the reduced SVD of $\bA$, where $\bU$ is $m \times m'$, $\bSigma$ is $m' \times m'$, and $\bV$ is $n \times m'$. Applying the matrix $ \bSigma^{-1} \bU^*$ to $\by$ preserves all the information about $\bx$, and so we can equivalently assume that the data is given by
\begin{equation}
\by' = \bSigma^{-1} \bU^* \by = \bV^* \bx + \bV^* \bw.
\end{equation}
Note that $\bV^* \bw$ is a Gaussian vector with covariance matrix $\sigma^2 \bV^* \bV = \sigma^2 \bI$.  Moreover, $\bV^*$ has unit-norm rows, so that $\norm[F]{\bV^*} \le m' \le m$.  We then apply Theorem~\ref{thm1} to establish (\ref{eq:mmriskbound2}).
\end{proof}

The intuition behind this result is that when noise is added to the measurements, we can boost the SNR by rescaling $\bA$ to have higher norm.  When we instead add noise to the signal, the noise is also scaled by $\bA$, and so no matter how $\bA$ is designed there will always be a penalty of $1/m$.

\subsection{Related work}

There have been a number of prior works that have established lower bounds on $M^*(\bA)$ or related quantities under varying assumptions~\cite{YeZ_Rate,RaskuWY_Minimax,RaskuWY_Lower,Louni_Generalized,RigolT_Exponential,AeroSZ_Information,SarvoBB_Measurements}.  In~\cite{AeroSZ_Information,SarvoBB_Measurements}, techniques from information theory similar to the ones that we use below are used to establish rather general lower bounds under the assumption that the entries of $\bx$ are generated i.i.d.\ according to some distribution.  For an appropriate choice of distribution, $\bx$ will be approximately sparse and~\cite{AeroSZ_Information,SarvoBB_Measurements} will yield asymptotic lower bounds of a similar flavor to ours.

The prior work most closely related to our results is that of Ye and Zhang~\cite{YeZ_Rate} and Raskutti, Wainwright, and Yu~\cite{RaskuWY_Minimax}.  In~\cite{YeZ_Rate} Ye and Zhang establish a bound similar to~(\ref{eq:mmriskbound_a}) in Theorem 1.  While the resulting bounds are substantially the same, the bounds in~\cite{YeZ_Rate} hold only in the asymptotic regime where $k \rightarrow \infty$, $n \rightarrow \infty$, and $\frac{k}{n} \rightarrow 0$, whereas our bounds hold for arbitrary finite values of $k$ and $n$, including the case where $k$ is relatively large compared to $n$.  In~\cite{RaskuWY_Minimax} Raskutti et al.\ reach a somewhat similar conclusion to our Theorem~\ref{thm1} via a similar argument, but where it is assumed that $\bA$ satisfies $\norm{\bA \bx}^2 \le (1+\delta) \norm{\bx}^2$ for all $\bx \in \Sigma_{2k}$, (i.e., the upper bound of the {\em restricted isometry property} (RIP)). In this case the authors show that\footnote{Note that it is possible to remove the assumption that $\bA$ satisfies the upper bound of the RIP, but with a rather unsatisfying result.  Specifically, for an arbitrary matrix $\bA$ with a fixed Frobenius norm, we have that $\norm[2]{\bA}^2 \le \norm[F]{\bA}^2$, so that $(1+\delta) \le \norm[F]{\bA}^2$.  This bound can be shown to be tight by considering a matrix $\bA$ with only one nonzero column.  However, applying this bound underestimates $M^*(\bA)$ by a factor of $n$.  Of course, the bounds coincide for ``good'' matrices (such as random matrices) which will have a significantly smaller value of $\delta$~\cite{RaskuWY_Lower}.  However, the random matrix framework is precisely that which we wish to challenge.}
$$
M^*(\bA) \ge C \frac{k \sigma^2}{(1+\delta) n} \log \left( n/k \right).
$$

Our primary aim, however, is to challenge the use of the RIP and/or random matrices and to determine whether we can do better via a different choice in $\bA$. Our approach relies on standard tools from information theory such as Fano's inequality, and as such is very similar in spirit to the approaches in~\cite{AeroSZ_Information,SarvoBB_Measurements,RaskuWY_Minimax}.  The proof of Theorem~\ref{thm1} begins by following a similar path to that taken in~\cite{RaskuWY_Minimax}.  As in the results of~\cite{RaskuWY_Minimax}, we rely on the construction of a packing set of sparse vectors.  However, we place no assumptions whatsoever on the matrix $\bA$.  To do this we must instead consider a random construction of this set, allowing us to apply the recently established matrix-version of Bernstein's inequality due to Ahlswede and Winter~\cite{AhlswW_Strong} to bound the empirical covariance matrix of the packing set. Our analysis is divided into two parts.  In Section~\ref{sec:proof} we provide the proof of Theorem~\ref{thm1}, and in Section~\ref{sec:packing} we provide the construction of the necessary packing set.

\subsection{Notation}

We now provide a brief summary of the notations used throughout the paper.  If $\bA$ is an $m \times n$ matrix and $T \subset \{1, \ldots, n \}$, then $\bA_T$ denotes the $m \times |T|$ submatrix with columns indexed by $T$.  Similarly, for a vector $\bx \in \real^n$ we let $\bx\restrict{T}$ denote the restriction of $\bx$ to $T$.  We will use $\norm[p]{\bx}$ to denote the standard $\ell_p$ norm of a vector, and for a matrix $\bA$, we will use $\norm[]{\bA}$ and $\norm[F]{\bA}$ to denote the operator and Frobenius norms respectively.

\section{Proof of Main Result}
\label{sec:proof}

In this section we establish the lower bound (\ref{eq:mmriskbound_a}) in Theorem~\ref{thm1}.  The proof of (\ref{eq:mmriskbound_b}) is provided in the Appendix.  In the proofs of both (\ref{eq:mmriskbound_a}) and (\ref{eq:mmriskbound_b}), we will assume that $\sigma = 1$ since the proof for arbitrary $\sigma$ follows by a simple rescaling.  To obtain the bound in (\ref{eq:mmriskbound_a}) we begin by following a similar course as in~\cite{RaskuWY_Minimax}.  Specifically, we will suppose that $\bx$ is distributed uniformly on a finite set of points $\Pset \subset \Sigma_k$, where $\Pset$ is constructed so that the elements of $\Pset$ are well separated. This allows us to apply the following lemma which follows from Fano's inequality combined with the convexity of the Kullback-Leibler (KL) divergence.  We provide a proof of the lemma in the Appendix.

\begin{lemma} \label{lem:fano}
Consider the measurement model where $\by = \bA \bx + \bz$ with $\bz \sim \mathcal{N}(\bzero,\bI)$.  Suppose that there exists set of points $\Pset = \{\bx_i\}_{i=1}^{\absval{\Pset}} \subset \Sigma_k$ such that for any $\bx_i, \bx_j \in \Pset$, $\norm{\bx_i-\bx_j}^2 \ge 8 n M^*(\bA)$, where $M^*(\bA)$ is defined as in (\ref{eq:minimaxrisk}).  Then
\begin{equation} \label{eq:fanoentbound}
\frac{1}{2} \log \absval{\Pset} - 1 \le \frac{1}{2 \absval{\Pset}^2} \sum_{i,j=1}^{\absval{\Pset}} \norm{\bA \bx_i - \bA \bx_j}^2.
\end{equation}
\end{lemma}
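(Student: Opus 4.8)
The plan is to read (\ref{eq:fanoentbound}) as a two-sided squeeze on the mutual information $I(\bx;\by)$: a testing-error lower bound produced by Fano's inequality, and a pairwise-KL upper bound produced by convexity. First I would reduce the estimation problem to a multiple-hypothesis test. Fix any estimator $\xhat$ that nearly attains the minimax risk, so that $\sup_{\bx \in \Sigma_k} \expval{\frac{1}{n}\norm{\xhat(\by)-\bx}^2} \le M^*(\bA)+\eps$; since every $\bx_i \in \Pset \subset \Sigma_k$, averaging over the uniform prior on $\Pset$ gives $\frac{1}{\absval{\Pset}}\sum_i \expval[\bx_i]{\norm{\xhat - \bx_i}^2} \le n(M^*(\bA)+\eps)$. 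I then define the induced test $\widehat{\imath}(\by) = \argmin_j \norm{\xhat(\by)-\bx_j}$, which decodes to the nearest codeword.

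Next I would convert this small average risk into a small testing-error probability $P_e = \frac{1}{\absval{\Pset}}\sum_i \prob[\bx_i]{\widehat{\imath} \neq i}$. The separation hypothesis $\norm{\bx_i-\bx_j}^2 \ge 8nM^*(\bA)$ together with the triangle inequality forces $\norm{\xhat-\bx_i}^2 \ge 2nM^*(\bA)$ whenever $\widehat{\imath}\neq i$, since a wrong nearest-codeword decision means $\xhat$ lies at least halfway toward some $\bx_j\neq\bx_i$. Applying Markov's inequality to the averaged risk bound then yields $2nM^*(\bA)\,P_e \le n(M^*(\bA)+\eps)$, and letting $\eps \to 0$ gives $P_e \le \tfrac{1}{2}$. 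Feeding this into Fano's inequality, $H(\bx\mid\by) \le H(P_e) + P_e\log(\absval{\Pset}-1) \le \log 2 + \tfrac12\log\absval{\Pset}$, and using $H(\bx\mid\by) = \log\absval{\Pset} - I(\bx;\by)$ I obtain $I(\bx;\by) \ge \tfrac12\log\absval{\Pset} - \log 2 \ge \tfrac12\log\absval{\Pset} - 1$, which is exactly the left-hand side of (\ref{eq:fanoentbound}).

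It remains to upper-bound $I(\bx;\by)$ by the right-hand side. Writing $P_i = \cN(\bA\bx_i,\bI)$ and $\bar P = \frac{1}{\absval{\Pset}}\sum_j P_j$ for the mixture, the mutual information under the uniform prior is $I(\bx;\by) = \frac{1}{\absval{\Pset}}\sum_i D(P_i\|\bar P)$. Here the convexity of the KL divergence in its second argument is the crucial tool: it gives $D(P_i\|\bar P) \le \frac{1}{\absval{\Pset}}\sum_j D(P_i\|P_j)$, so that $I(\bx;\by) \le \frac{1}{\absval{\Pset}^2}\sum_{i,j}D(P_i\|P_j)$. Since $P_i$ and $P_j$ are Gaussians with common covariance $\bI$, the closed form $D(P_i\|P_j) = \tfrac12\norm{\bA\bx_i-\bA\bx_j}^2$ applies, and the bound collapses to $\frac{1}{2\absval{\Pset}^2}\sum_{i,j}\norm{\bA\bx_i-\bA\bx_j}^2$, closing the chain.

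The routine parts are the Gaussian KL formula and the entropy decomposition. I expect the step needing the most care to be the estimation-to-testing reduction: one must check that the factor $8$ in the separation hypothesis is precisely what makes the halfway triangle-inequality argument produce $P_e\le\tfrac12$, and one should handle the possibility that the minimax infimum is not attained by working with a near-minimizer and passing to the limit $\eps\to0$. A secondary point to get right is the \emph{direction} of the convexity bound: it is convexity of $D(P\|\cdot)$ in its second slot, not joint convexity, that yields the pairwise upper bound in the form required here.
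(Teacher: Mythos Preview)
Your proposal is correct and follows essentially the same route as the paper's proof: reduce estimation to testing via the nearest-codeword map, use the separation $8nM^*(\bA)$ plus the triangle inequality and Markov to get $P_e\le\tfrac12$, feed this into Fano for the lower bound on $I(\bx;\by)$, and then bound $I(\bx;\by)$ above by the pairwise Gaussian KL average via convexity. The only cosmetic difference is that you work with a near-minimax estimator and pass to the limit $\eps\to 0$, whereas the paper simply invokes an estimator whose Bayes risk under the uniform prior on $\Pset$ is at most $M^*(\bA)$; both routes are valid and yield the same bound.
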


\noindent By taking the set $\Pset$ in Lemma~\ref{lem:Sconstruction} below and rescaling these points by $4 \sqrt{n M^*(\bA)}$, we have that there exists a set $\Pset$ satisfying the assumptions of Lemma~\ref{lem:fano} with
$$
\absval{\Pset} = \left(n/k \right)^{k/4},
$$
and hence from (\ref{eq:fanoentbound}) we obtain
\begin{equation} \label{eq:mainbound1}
\frac{k}{4} \log \left( n/k \right) - 2 \le \frac{1}{\absval{\Pset}^2} \sum_{i,j=1}^{\absval{\Pset}} \norm{\bA \bx_i - \bA \bx_j}^2 = \tr{ \bA^*\bA \left(\frac{1}{\absval{\Pset}^2} \sum_{i,j=1}^{\absval{\Pset}} \left(\bx_i -\bx_j\right) \left(\bx_i -\bx_j\right)^* \right)}.
\end{equation}
If we set
$$
\bmu = \frac{1}{\absval{\Pset}} \sum_{i=1}^{\absval{\Pset}} \bx_i  \quad \quad \mathrm{and} \quad \quad \bQ = \frac{1}{\absval{\Pset}} \sum_{i=1}^{\absval{\Pset}}\bx_i \bx_i^*,
$$
then one can show that
$$
\frac{1}{\absval{\Pset}^2} \sum_{i,j=1}^{\absval{\Pset}} \left(\bx_i -\bx_j\right) \left(\bx_i -\bx_j\right)^* = 2 \left( \bQ - \bmu \bmu^* \right).
$$
Thus, we can bound (\ref{eq:mainbound1}) by
$$
2 \, \tr{ \bA^* \bA \left(\bQ - \bmu \bmu^* \right)} \le 2 \, \tr{\bA^* \bA \bQ },
$$
where the inequality follows since $\tr{\bA^* \bA \bmu \bmu^*} = \norm{\bA \bmu}^2 \ge 0$.  Moreover, since $\bA^* \bA$ and $\bQ$ are positive semidefinite,
$$
\tr{\bA^* \bA \bQ } \le \tr{\bA^* \bA} \norm[]{\bQ} = \norm[F]{\bA}^2 \norm[]{\bQ}.
$$
Combining this with (\ref{eq:mainbound1}) and applying Lemma~\ref{lem:Sconstruction} to bound the norm of $\bQ$ --- recalling that it has been appropriately rescaled --- we obtain
$$
\frac{k}{4} \log \left( n/k \right) - 2 \le (1+\beta) 32  M^*(\bA) \norm[F]{\bA}^2,
$$
where $\beta$ is a constant that can be arbitrarily close to $0$. This
yields the desired result.

\section{Packing Set Construction}
\label{sec:packing}

We now return to the problem of constructing the packing set $\Pset$.  As noted above, our construction exploits the following matrix Bernstein inequality of Ahlswede and Winter~\cite{AhlswW_Strong}.  See also~\cite{Tropp_User}.

\begin{thm}[Matrix Bernstein Inequality] \label{thm:matrixBernstein}
Let $\{\bX_i\}$ be a finite sequence of independent zero-mean random self-adjoint matrices of dimension $n \times n$.  Suppose that $\norm[]{\bX_i} \le 1$ almost surely for all $i$ and set $\rho^2 = \sum_{i}\norm[]{\expval{\bX_i^2}}$. Then for all $t \in [0, 2\rho^2]$,
\begin{equation} \label{eq:matrixBernstein}
\prob{\norm[]{\sum_{i} \bX_i} \ge t} \le 2n \exp \left( -\frac{ t^2}{4 \rho^2}  \right).
\end{equation}
\end{thm}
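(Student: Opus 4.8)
The plan is to run the matrix Laplace-transform (Chernoff) method, which reduces the operator-norm tail to a bound on a matrix moment generating function. First I would record the basic transfer bound: for any self-adjoint random matrix $\bY$ and any $\theta>0$, Markov's inequality applied to $e^{\theta\lambda_{\max}(\bY)}$, together with $\lambda_{\max}\!\left(e^{\theta\bY}\right)\le \tr{e^{\theta\bY}}$ (every eigenvalue of $e^{\theta\bY}$ is positive, so the largest is dominated by the trace), gives
\begin{equation*}
\prob{\lambda_{\max}(\bY)\ge t}\le e^{-\theta t}\,\expval{\tr{e^{\theta\bY}}}.
\end{equation*}
Taking $\bY=\sum_i\bX_i$ reduces the problem to controlling $\expval{\tr{e^{\theta\sum_i\bX_i}}}$.

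The heart of the argument is to peel off the summands one at a time. Because the $\bX_i$ need not commute, I cannot simply factor the exponential as in the scalar case; instead I would invoke the Golden--Thompson inequality $\tr{e^{\boldsymbol{H}+\boldsymbol{K}}}\le\tr{e^{\boldsymbol{H}}e^{\boldsymbol{K}}}$. Writing $\boldsymbol{S}_k=\sum_{i\le k}\bX_i$, I condition on $\boldsymbol{S}_{k-1}$, use independence to replace $e^{\theta\bX_k}$ by $\expval{e^{\theta\bX_k}}$ inside the trace, and apply the inequality $\tr{\boldsymbol{P}\boldsymbol{R}}\le\norm[]{\boldsymbol{R}}\tr{\boldsymbol{P}}$ valid for positive semidefinite $\boldsymbol{P},\boldsymbol{R}$. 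Iterating from $\boldsymbol{S}_0=\bzero$, whose trace contributes the dimensional factor $\tr{\bI}=n$, yields
\begin{equation*}
\expval{\tr{e^{\theta\sum_i\bX_i}}}\le n\prod_i\norm[]{\expval{e^{\theta\bX_i}}}.
\end{equation*}
This noncommutative bookkeeping is the step I expect to be the main obstacle, and the slack in Golden--Thompson is exactly what makes the constant in the exponent loose (a sharper route would replace it by Lieb's concavity theorem).

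Next I would bound each local moment generating function. Starting from the scalar inequality $e^{\theta x}\le 1+\theta x+(e^{\theta}-1-\theta)x^2$ valid on $[-1,1]$, transferred to matrices whose spectrum lies in $[-1,1]$ through the eigenvalue calculus, and using $\expval{\bX_i}=\bzero$, I obtain
\begin{equation*}
\expval{e^{\theta\bX_i}}\preceq\bI+(e^{\theta}-1-\theta)\,\expval{\bX_i^2},
\end{equation*}
so that $\norm[]{\expval{e^{\theta\bX_i}}}\le\exp\!\left((e^{\theta}-1-\theta)\norm[]{\expval{\bX_i^2}}\right)$. Multiplying over $i$ and recalling $\rho^2=\sum_i\norm[]{\expval{\bX_i^2}}$ gives
\begin{equation*}
\prob{\lambda_{\max}(\textstyle\sum_i\bX_i)\ge t}\le n\exp\!\left(-\theta t+(e^{\theta}-1-\theta)\rho^2\right).
\end{equation*}

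Finally I would optimize and symmetrize. Choosing $\theta=\log(1+t/\rho^2)$ turns the exponent into the Bennett form $-\rho^2 h(t/\rho^2)$ with $h(u)=(1+u)\log(1+u)-u$; since $h(u)\ge u^2/4$ on $[0,2]$, for every $t\le 2\rho^2$ the bound becomes $n\exp(-t^2/(4\rho^2))$. Because $-\sum_i\bX_i$ satisfies identical hypotheses with the same $\rho^2$, the same estimate controls $\lambda_{\max}(-\sum_i\bX_i)$; union-bounding over the two events and using $\norm[]{\sum_i\bX_i}=\max\{\lambda_{\max}(\sum_i\bX_i),\lambda_{\max}(-\sum_i\bX_i)\}$ converts the eigenvalue tail into the operator-norm tail, producing the factor $2n$ and completing the proof of (\ref{eq:matrixBernstein}).
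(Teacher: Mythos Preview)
The paper does not prove Theorem~\ref{thm:matrixBernstein}; it is quoted from Ahlswede and Winter~\cite{AhlswW_Strong} (see also~\cite{Tropp_User}) and used as a black box inside the proof of Lemma~\ref{lem:Sconstruction}. So there is no in-paper argument to compare against.

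That said, your proposal is a correct proof and is precisely the Ahlswede--Winter argument the paper cites: matrix Laplace transform, Golden--Thompson to peel off one summand at a time (producing the product $\prod_i\norm[]{\expval{e^{\theta\bX_i}}}$ and the dimensional factor $n$), the scalar bound $e^{\theta x}\le 1+\theta x+(e^\theta-1-\theta)x^2$ on $[-1,1]$ transferred spectrally, optimization to the Bennett function $h(u)=(1+u)\log(1+u)-u$, and the elementary estimate $h(u)\ge u^2/4$ on $[0,2]$ to obtain the sub-Gaussian exponent for $t\le 2\rho^2$. The symmetrization via $-\bX_i$ and a union bound correctly yields the factor $2n$. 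Your remark that Lieb's concavity (Tropp's approach) would sharpen things is also accurate, but for the form stated here Golden--Thompson suffices.
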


We construct the set $\Pset$ by choosing points at random, which allows us to apply Theorem~\ref{thm:matrixBernstein} to establish a bound on the empirical covariance matrix. In bounding the size of $\Pset$ we follow a similar course as in~\cite{RaskuWY_Minimax} and rely on techniques from~\cite{Kuhn_Lower}.

\begin{lemma} \label{lem:Sconstruction}
Let $n$ and $k$ be given, and suppose for simplicity that $k$ is even and $k < n/2$.  There exists a set $\Pset = \{x_i\}_{i=1}^{\absval{\Pset}} \subset \Sigma_k$ of size
\begin{equation} \label{eq:Scard}
\absval{\Pset} = \left( n/k \right)^{k/4}
\end{equation}
such that
\begin{enumerate}[(i)]
\item $\norm{\bx_i - \bx_j}^2 \ge 1/2$ for all $\bx_i, \bx_j \in \Pset$ with $i \neq j$; and
\item $\norm[]{\frac{1}{\absval{\Pset}} \sum_{i=1}^{\absval{\Pset}} \bx_i \bx_i^* - \frac{1}{n} \bI} \le \beta / n$,
\end{enumerate}
where $\beta$ can be made arbitrarily close to $0$ as $n \rightarrow \infty$.
\end{lemma}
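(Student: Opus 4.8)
My plan is to produce $\Pset$ by a single random draw and then to verify both requirements on the \emph{same} realization, using Theorem~\ref{thm:matrixBernstein} for~(ii) and a union bound for~(i). I would partition $\{1,\dots,n\}$ into $k$ consecutive blocks of size $q := n/k$ and draw $N := (n/k)^{k/4}$ vectors $\bx_1,\dots,\bx_N$ i.i.d.\ as follows: in each block pick one coordinate uniformly at random and assign it the value $\varepsilon/\sqrt{k}$ with $\varepsilon$ an independent uniform sign, setting the other coordinates to zero. Each $\bx_i$ is then $k$-sparse with $\norm{\bx_i}^2 = 1$. A one-line computation shows $\expval{\bx_i\bx_i^*} = \frac1n \bI$: the diagonal entries equal $\prob{\ell \in \supp \bx_i}\cdot \frac1k = \frac1q \cdot \frac1k = \frac1n$, off-diagonal entries within a block vanish because at most one coordinate per block is nonzero, and off-diagonal entries across blocks vanish because the signs are independent and mean-zero. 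Thus the target $\frac1n\bI$ in~(ii) is exactly the mean of the empirical covariance, and~(ii) becomes a concentration statement.

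For~(ii) I would set $\bX_i := \bx_i\bx_i^* - \frac1n\bI$, which are i.i.d., zero-mean, and self-adjoint. Since $\bx_i\bx_i^*$ is rank one with single nonzero eigenvalue $\norm{\bx_i}^2 = 1$, the eigenvalues of $\bX_i$ are $1-\frac1n$ and $-\frac1n$, so $\norm[]{\bX_i} \le 1$ almost surely. Using $(\bx_i\bx_i^*)^2 = \bx_i\bx_i^*$ one finds $\expval{\bX_i^2} = \frac1n(1-\frac1n)\bI$, whence $\rho^2 = N\norm[]{\expval{\bX_1^2}} \le N/n$. Applying Theorem~\ref{thm:matrixBernstein} with $t = N\beta/n$ (which lies in $[0,2\rho^2]$ for small $\beta$) bounds the probability that $\norm[]{\frac1N\sum_i \bx_i\bx_i^* - \frac1n\bI} > \beta/n$ by $2n\exp(-N\beta^2/(4n))$, and this is small once $N$ exceeds a constant multiple of $n\log(n)/\beta^2$.

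For~(i), note $\norm{\bx_i - \bx_j}^2 = 2 - 2\iprod{\bx_i}{\bx_j}$, so it suffices to guarantee $\iprod{\bx_i}{\bx_j} \le 3/4$ for every pair. Writing $A_{ij}$ for the number of blocks in which $\bx_i$ and $\bx_j$ select the same coordinate, $\iprod{\bx_i}{\bx_j}$ is $\frac1k$ times a signed sum of $A_{ij}$ independent signs, so $\iprod{\bx_i}{\bx_j} > 3/4$ forces $A_{ij} > 3k/4$. Since $A_{ij}$ is $\mathrm{Binomial}(k,1/q)$, a standard tail bound gives $\prob{A_{ij} \ge 3k/4} \le \binom{k}{3k/4}(k/n)^{3k/4} \le 2^k (k/n)^{3k/4}$, and a union bound over the $\binom{N}{2}$ pairs shows that separation fails with probability at most $N^2\, 2^k (k/n)^{3k/4} = 2^k (k/n)^{k/4}$, which is below $1/2$ whenever $n/k$ exceeds an absolute constant; this counting step is where I would borrow the technique of~\cite{Kuhn_Lower}. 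Since the failure probabilities for~(i) and~(ii) each fall below $1/2$, the two bad events do not exhaust the probability space, so some realization satisfies both~(i) and~(ii) simultaneously, and that realization is the desired $\Pset$.

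The main obstacle is that~(i) and~(ii) pull $\absval{\Pset} = N$ in opposite directions: the union bound for separation forces $N$ to be small, while Bernstein concentration to within the tight additive tolerance $\beta/n$ forces $N \gtrsim n\log(n)/\beta^2$ to be large. The choice $\absval{\Pset} = (n/k)^{k/4}$ is engineered to sit inside this window and, for $k$ not too small, to make the separation-failure probability negligible while still supplying enough points to drive $\beta \to 0$. Sharpening the covariance estimate so that $\beta$ can genuinely be taken arbitrarily close to $0$ --- rather than merely bounded by a constant --- is the delicate part, and is precisely what the \emph{matrix} Bernstein inequality buys us over a scalar argument applied coordinatewise.
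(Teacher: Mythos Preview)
Your proposal is correct and structurally parallel to the paper's proof: both use the probabilistic method, both apply Theorem~\ref{thm:matrixBernstein} to $\bX_i = \bx_i\bx_i^* - \frac{1}{n}\bI$ for~(ii), and both finish by showing the failure probabilities for~(i) and~(ii) sum to less than~$1$. The genuine difference is the random model. The paper draws $\bx_i$ uniformly from the full set $\mathcal{U} = \{\bx \in \{0,\pm 1/\sqrt{k}\}^n : \norm[0]{\bx} = k\}$ and handles~(i) by a direct counting argument: it observes that $\norm{\bx_i - \bx_j}^2 < 1/2$ forces $\norm[0]{\bx_i - \bx_j} \le k/2$, bounds the number of such neighbors by $\binom{n}{k/2}3^{k/2}$, and compares this to $|\mathcal{U}| = \binom{n}{k}2^k$ via an elementary ratio estimate. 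Your block model instead reduces~(i) to a binomial tail bound on the number of shared coordinates.

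What each buys: your block construction makes the verification of $\expval{\bx_i\bx_i^*} = \frac{1}{n}\bI$ and the separation estimate slightly cleaner, but it tacitly requires $k \mid n$ and, more substantively, your union bound for~(i) needs $n/k$ to exceed an absolute constant (roughly $16$) rather than merely $2$. The paper's counting argument goes through under the stated hypothesis $k < n/2$ with no further restriction. Both routes share the same asymptotic caveat for~(ii): the requirement $\beta^2 \gtrsim n\log n / \absval{\Pset}$ only drives $\beta \to 0$ when $(n/k)^{k/4}$ outgrows $n\log n$, which fails for very small fixed $k$; this is not specific to your argument.
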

\begin{proof}
We will show that such a set $\Pset$ exists via the probabilistic method.  Specifically, we will show that if we draw $\absval{\Pset}$ independent $k$-sparse vectors at random, then the set will satisfy both {\em (i)} and {\em (ii)} with probability strictly greater than 0.  We will begin by considering the set
$$
\mathcal{U} = \left\{ \bx \in \left\{0,+ \sqrt{1/k}, - \sqrt{1/k} \right\}^n : \norm[0]{\bx} = k \right\}.
$$
Clearly, $\absval{\mathcal{U}} = \binom{n}{k}2^k$.  Next, note that for all $\bx,\bx' \in \mathcal{U}$, $\frac{1}{k} \norm[0]{\bx' - \bx} \le \norm{\bx' - \bx}^2$, and thus if $\norm{\bx' - \bx}^2 \le 1/2$ then $\norm[0]{\bx' - \bx} \le k/2$.  From this we observe that for any fixed $\bx \in \mathcal{U}$,
$$
\left| \left\{ \bx' \in \mathcal{U} :  \norm{\bx'-\bx}^2 \le 1/2 \right\} \right| \le \left| \left\{ \bx' \in \mathcal{U} : \norm[0]{\bx'-\bx} \le k/2 \right\} \right| \le \binom{n}{k/2} 3^{k/2}.
$$
Suppose that we construct $\Pset$ by picking elements of $\mathcal{U}$ uniformly at random.  When adding the $j^{\mathrm{th}}$ point to $\Pset$, the probability that $\bx_j$ violates {\em (i)} with respect to the previously added points is bounded by
$$
\frac{(j-1) \binom{n}{k/2} 3^{k/2} }{\binom{n}{k} 2^k }.
$$
Thus, using the union bound, we can bound the total probability that $\Pset$ will fail to satisfy {\em (i)}, denoted $P_1$, by
$$
P_1 \le \sum_{j=1}^{\absval{\Pset}} \frac{(j-1) \binom{n}{k/2} 3^{k/2} }{\binom{n}{k} 2^k } \le \frac{\absval{\Pset}^2}{2} \frac{\binom{n}{k/2}}{\binom{n}{k}} \left(\frac{\sqrt{3}}{2} \right)^k.
$$
Next, observe that
$$
\frac{\binom{n}{k}}{\binom{n}{k/2}} = \frac{(k/2)! (n -k/2)!}{k! (n-k)!} = \prod_{i=1}^{k/2} \frac{n - k + i}{k/2 + i} \ge \left( \frac{n- k + k/2}{k/2 + k/2} \right)^{k/2} = \left( \frac{n}{k} - \frac{1}{2} \right)^{k/2},
$$
where the inequality follows since $(n-k+i)/(k/2 + i)$ is decreasing as a function of $i$ provided that $n-k > k/2$. Also,
$$
\left( \frac{n}{k} \right)^{k/2} \left( \frac{\sqrt{3}}{2} \right)^k = \left( \frac{3n}{4k} \right)^{k/2} \le \left(\frac{n}{k} - \frac{1}{2} \right)^{k/2}
$$
with the proviso $k \le n/2$.  Thus, for $\absval{\Pset}$ of size
given in (\ref{eq:Scard}),
\begin{equation} \label{eq:P1}
P_1 \le \frac{1}{2} \left( \frac{n}{k} \right)^{k/2} \frac{\binom{n}{k/2}}{\binom{n}{k}} \left(\frac{\sqrt{3}}{2} \right)^k \le \frac{1}{2} \left(\frac{n}{k} - \frac{1}{2} \right)^{k/2} \frac{\binom{n}{k/2}}{\binom{n}{k}} \le \frac{1}{2} \frac{\binom{n}{k}}{\binom{n}{k/2}} \frac{\binom{n}{k/2}}{\binom{n}{k}} \le \frac{1}{2}.
\end{equation}

Next, we consider {\em (ii)}.  We begin by letting
$$
\bX_i = \bx_i \bx_i^* - \frac{\bI}{n}.
$$
Since $\bx_i$ is drawn uniformly at random from $\mathcal{U}$, it is straightforward to show that $\norm[]{\bX_i} \le 1$ and that $\expval{\bx_i \bx_i^*} = \bI / n$, which implies that $\expval{\bX_i} = 0$. Moreover,
$$
\expval{\bX_i^2} = \expval{ \left( \bx_i \bx_i^* \right)^2 } - \left( \frac{1}{n} \bI \right)^2 = \frac{(n-1)}{n^2}\bI.
$$
Thus we obtain $\rho^2 = \sum_{i=1}^{\absval{\Pset}} \norm[]{\expval{\bX_i^2}} = \absval{\Pset} (n-1)/n^2 \le \absval{\Pset} / n$. Hence, we can apply Theorem~\ref{thm:matrixBernstein} to obtain
$$
\prob{ \norm[]{ \sum_{i=1}^{\absval{\Pset}} \bX_i  }  \ge t } \le 2 n \exp \left( - \frac{t^2 n}{4 \absval{\Pset}} \right).
$$
Setting $t = \absval{\Pset} \beta /n$, this reduces to show that the probability that $\Pset$ will fail to satisfy {\em (ii)}, denoted $P_2$, is bounded by
$$
P_2 \le 2n \exp \left( - \frac{\beta^2 \absval{\Pset} }{4 n} \right).
$$

For the lemma to hold we require that $P_1+P_2 < 1$, and since $P_1 < \frac{1}{2}$ it is sufficient to show that $P_2 < \frac{1}{2}$.  This will occur provided that
$$
\beta^2 > \frac{4n \log(4n) }{\absval{\Pset}}.
$$
Since $\absval{\Pset} = \Theta\left( (n/k)^k \right)$, $\beta$ can be made arbitrarily small as $n \rightarrow \infty$.
\end{proof}

\section*{Appendix}

\begin{proof}[Proof of (\ref{eq:mmriskbound_b}) in Theorem~\ref{thm1}]

We begin by noting that
$$
M^*(\bA) = \inf_{\xhat} \; \sup_{T : |T| \le k} \; \sup_{\bx : \supp(\bx) = T} \expval{\frac{1}{n} \norm{\xhat(\by) - \bx}^2} \ge \sup_{T : |T| \le k} \; \inf_{\xhat} \; \sup_{\bx : \supp(\bx) = T} \expval{\frac{1}{n} \norm{\xhat(\by) - \bx}^2}.
$$
Thus for the moment we restrict our attention to the subproblem of bounding
\begin{equation} \label{eq:subprob}
M^*(\bA_T) = \inf_{\xhat} \; \sup_{\bx : \supp(\bx) = T} \expval{\frac{1}{n}\norm{\xhat(\by) - \bx}^2} = \inf_{\xhat} \; \sup_{\bx \in \real^k} \expval{\frac{1}{n} \norm{\xhat(\bA_T \bx + \bz) - \bx}^2},
\end{equation}
where $\xhat(\cdot)$ takes values in $\real^k$.  The last equality of (\ref{eq:subprob}) follows since if $\supp(\bx) = T$ then
$$
\norm{\xhat(\by) - \bx}^2 = \norm{\xhat(\by)\restrict{T} - \bx\restrict{T}}^2 + \norm{\xhat(\by)\restrict{T^c}}^2,
$$
so that the risk can always be decreased by setting $\xhat(\by)|_{T^c} = 0$.  This subproblem (\ref{eq:subprob}) has a well-known solution (see Exercise 5.8 on pp.\ 403 of~\cite{LehmaC_Theory}).  Specifically, let $\lambda_i(\bA_T^* \bA_T)$ denote the eigenvalues of the matrix $\bA_T^* \bA_T$.  Then
\begin{equation} \label{eq:subprob3}
M^*(\bA_T) = \frac{1}{n} \sum_{i=1}^k \frac{1}{\lambda_i(\bA_T^* \bA_T)}.
\end{equation}
Thus we obtain
\begin{equation} \label{eq:bound1}
M^*(\bA) \ge \sup_{T : |T| \le k} M^*(\bA_T) = \sup_{T : |T| \le k} \frac{1}{n} \sum_{i=1}^k \frac{1}{\lambda_i(\bA_T^* \bA_T)}.
\end{equation}
Note that if there exists a subset $T$ for which $\bA_T$ is not full rank, then at least one of the eigenvalues $\lambda_i(\bA_T^* \bA_T)$ will vanish and the minimax risk will be unbounded.  This also shows that the minimax risk is always unbounded when $m < k$.

Thus, we now assume that $\bA_T$ is full rank for any choice of $T$. Since $f(x) = 1/x$ is a convex function for $x > 0$, we have that
\begin{equation*}
\sum_{i=1}^k \frac{1}{\lambda_i(\bA_T^* \bA_T)} \ge \frac{k^2}{\sum_{i=1}^k \lambda_i(\bA_T^* \bA_T) } = \frac{k^2}{\norm[F]{\bA_T}^2}.
\end{equation*}
Since there always exists a set of $k$ columns $T_0$ such that $\norm[F]{\bA_{T_0}}^2 \le (k/n)\norm[F]{\bA}^2$,  (\ref{eq:bound1}) reduces to yield the desired result.

\end{proof}

\begin{proof}[Proof of Lemma~\ref{lem:fano}]
To begin, note that if $\bx$ is uniformly distributed on the set of points in $\Pset$, then there exists an estimator $\xhat(\by)$ such that
\begin{equation} \label{eq:probbound}
\expval[\bx,\bz]{\frac{1}{n} \norm{\xhat(\by) - \bx}^2} \le M^*(\bA),
\end{equation}
where the expectation is now taken with respect to both the signal and
the noise.  We next consider the problem of deciding which $\bx_i \in
\Pset$ generated the observations $\by$.  Towards this end, set
$$
T(\xhat(\by)) = \mathop{\arg \min}_{\bx_i \in \Pset} \norm{\xhat(\by) - \bx_i}.
$$
Define $P_e = \prob{T(\xhat(\by)) \neq \bx}$.  From Fano's inequality~\cite{CoverT_Elements} we have that
\begin{equation} \label{eq:fano}
H(\bx|\by) \le 1 + P_e \log \absval{\Pset}.
\end{equation}

We now aim to bound $P_e$.  We begin by noting that for any $\bx_i \in \Pset$ and any $\xhat(\by)$, $T(\xhat(\by)) \neq \bx_i$ if and only if there exists an $\bx_j \in \Pset$ with $j \neq i$ such that
$$
\norm{\xhat(\by) - \bx_i} \ge \norm{\xhat(\by) - \bx_j} \ge \norm{\bx_i - \bx_j} - \norm{\xhat(\by) - \bx_i}.
$$
This would imply that
$$
2 \norm{\xhat(\by) - \bx_i} \ge \norm{\bx_i - \bx_j} \ge \sqrt{ 8n M^*(\bA)}.
$$
Thus, we can bound $P_e$ using Markov's inequality as follows:
$$
P_e \le \prob{ \norm{\xhat(\by) - \bx_i}^2 \ge 8 n M^*(\bA)/4 } \le \frac{\expval[\bx,\bz]{\norm{\xhat(\by) - \bx_i}^2}}{2 n M^*(\bA)} \le \frac{nM^*(\bA)}{2 n M^*(\bA)} = \frac{1}{2}.
$$
Combining this with (\ref{eq:fano}) and the fact that $H(\bx) = \log \absval{\Pset}$, we obtain
$$
I(\bx,\by) = H(\bx) - H(\bx|\by) \ge \frac{1}{2} \log \absval{\Pset} - 1.
$$
From the convexity of KL divergence (see~\cite{HanV_Generalizing} for details), we have that
$$
I(\bx,\by) \le \frac{1}{\absval{\Pset}^2} \sum_{j,k=1}^{\absval{\Pset}} D \left(\cP_i , \cP_j \right),
$$
where $D \left(\cP_i , \cP_j \right)$ represents the KL divergence from $\cP_i$ to $\cP_j$ where $\cP_i$ denotes the distribution of $\by$ conditioned on $\bx = \bx_i$.  Since $\bz \sim \cN(\bzero,\bI)$, $\cP_i$ is simply given by $\cN( \bA \bx_i, \bI)$.  Standard calculations demonstrate that $D \left(\cP_i , \cP_j \right) = \frac{1}{2} \norm{\bA \bx_i - \bA \bx_j}^2$, establishing (\ref{eq:fanoentbound}).
\end{proof}

\bibliographystyle{plain}
\footnotesize
\bibliography{bibpreamble,bibmain}

\end{document}